\newtheorem{theorem}{Theorem}
\newtheorem{lemma}{Lemma}
\begin{document}
\title{Concave Shape of the Yield Curve and No Arbitrage}
\author{Jian Sun\\Economics School of Fudan University}

\maketitle

In fixed income sector, the yield curve is probably the most observed indicator by the market for trading and financing purposes.        A yield curve  plots interest rates  across different contract maturities from short end to as long as 30 years. For each currency, the corresponding  curve shows the relation between the level of the interest rates (or cost of borrowing) and the time to maturity. For example, the U.S. dollar interest rates paid on U.S. Treasury securities for various maturities are plotted as the US treasury curve.  For the same currency, if the swap market is used, we could also plot the swap rates across the tenors which would be called the swap curve.

The shape of the yield curve gives an idea of future interest rate changes and economic activity. There are three main types of yield curve shapes: normal, inverted and flat (or humped). A normal yield curve is one in which longer maturity bonds have a higher yield compared to shorter-term bonds due to the risks associated with time. An inverted yield curve is one in which the shorter-term yields are higher than the longer-term yields, which can be a sign of upcoming recession. In a flat or humped yield curve, the shorter- and longer-term yields are very close to each other, which is also a predictor of an economic transition.

A normal or up-sloped yield curve indicates yields on longer-term bonds may continue to rise, responding to periods of economic expansion. When investors expect longer-maturity bond yields to become even higher in the future, many would temporarily park their funds in shorter-term securities in hopes of purchasing longer-term bonds later for higher yields.

In a rising interest rate environment, it is risky to have investments tied up in longer-term bonds when their value has yet to decline as a result of higher yields over time. The increasing temporary demand for shorter-term securities pushes their yields even lower, setting in motion a steeper up-sloped normal yield curve.

An inverted or down-sloped yield curve suggests yields on longer-term bonds may continue to fall, corresponding to periods of economic recession. When investors expect longer-maturity bond yields to become even lower in the future, many would purchase longer-maturity bonds to lock in yields before they decrease further. The increasing onset of demand for longer-maturity bonds and the lack of demand for shorter-term securities lead to higher prices but lower yields on longer-maturity bonds, and lower prices but higher yields on shorter-term securities, further inverting a down-sloped yield curve.

A flat yield curve may arise from normal or inverted yield curve, depending on changing economic conditions. When the economy is transitioning from expansion to slower development or even recession, yields on longer-maturity bonds tend to fall and yields on shorter-term securities likely rise, inverting a normal yield curve into a flat yield curve. When the economy is transitioning from recession to recovery and potential expansion, yields on longer-maturity bonds are set to rise and yields on shorter-maturity securities are sure to fall, tilting an inverted yield curve towards a flat yield curve.

According to Salomon Brothers¡¯ working paper \cite{salomon}, It is commonly known there are three main influences on the treasury  yield curve  shape :
\begin{enumerate}
  \item
  Yield curve shape reflects the market's expectations of future rate change.

  \item
  Yield curve shape reflects the bond risk premia (expected return differentials across different maturities)

  \item
  Yield curve shape reflects the convexity benefit of bonds of different tenors.
\end{enumerate}

Even the yield curve can be flat, upward or downward (inverted), however,  yield curve is generally concave.  There is a lack of explanation of the concavity of the yield curve shape from economics theory.  We offer in this article an explanation of the concavity shape of the yield curve from trading perspectives.

Our main argument is to construct an investment portfolio consisting fixed income instruments and demonstrate that if the yield curve is not concave, an arbitrage will emerge. Our results also depend on an assumption that yield curve moves up and down in parallel. This assumption is not precisely true but approximately acceptable in reality.

\section{Interest Rates}

We first discuss the relations between different notions of interest rates. In the discussion below, certain  simplifications will be made.   We mainly discuss three types of rates which are often used in industry£ºzero rates, forward rates and par rates. Correspondingly,there are three types of instruments: zero coupon bonds, forward rate agreements and swaps (or par bonds). In all the calculations, we ignore the day count and business  conventions.  For swaps we assume the exchange of payment occurs once a year£¬and for forward rate agreements we also assume that the contract expire in integer years and span for one year.

Let the current time be  $0$, and future annual years be $1, 2, \cdots, n$. The zeros rates are yield to maturity of the zero coupon bonds maturing at these times. Zero rates and the zero coupon bond prices have the relationship
\begin{equation}
  p_{i} =\frac{1}{(1+y_{i})^{i}}
\end{equation}
where the $p_{i}$ is the zero coupon bond price and $y_{i}$ is the yield to maturity.  The forward rates for time interval $(i, i+1)$ are the rates specified in the forward rate agreements (FRA) to lockin the interest  rates  at the future time $i$ for the one year period. This rate $f_{i}$ can be calculated as
\begin{equation}
  f_{i} = \frac{p_{i}}{p_{i+1}}-1 = \frac{(1+y_{i+1})^{i+1}}{(1+y_{i})^{i}}-1
\end{equation}
and finally the par rates $s_{i}$ are calculated as
\begin{equation}
  s_{i} =\frac{1-p_{i}}{p_{1}+p_{2} + \cdots + p_{i}}
\end{equation}

It is known that as long as $f_{i} >0$, there is no arbitrage between these trading instruments. Hence in theory, the zero curve can have any shape as long as $f_{i}>0$.  For example, the curve might be upward or might be downward. The curve might be convex or might be concave.

But in reality the curves are usually concavely upward shaped.  In this article we show that the zero curve and the swap curve has to be  concave if the following conditions are true: there is  no arbitrage;  yield curve moves in parallel. The first assumption is realistic as many trading desks around the world are watching the yield curve and try to take arbitrage opportunities at any time. The second assumption is not realistic, but close to reality. In a rising rates environment, the rates of different tenors all increase in general and in a falling rates environment, rates of different tenors all decrease.

\section{Zero Coupon Bonds}

Our main results  depend on the following crucial well known results:

\vspace{1em}

\noindent {\bf Convexity Inequality} The function $f(x), x\in \mathrm{R}$ is a convex function, then by definition it should satisfy the following inequality. For any $\lambda_{1}>0, \lambda_{2} >0$ which $\lambda_{1} + \lambda_{2} =1$ and $a, b \in \mathrm{R}$, we should have
\begin{equation*}
f(\lambda_{1} a+\lambda_{2} b)\le \lambda_{1}f(a)+\lambda_{2} f(b)
\end{equation*}

We now set our securities. We have three zero coupon bonds, calling them $B_{1}, B_{2}, B_{3}$ corresponds to three maturities $T_{1}< T_{2}< T_{3}$ with their yields be $y_{1}, y_{2}, y_{3}$.  So far we impose no conditions on these yields as long as the implied forward is positive. We now construct a trading portfolio by purchasing $\lambda_{1}$ dollar amount of $B_{1}$, $\lambda_{3}$ dollar amount of $B_{3}$ and short $\lambda_{2}$ dollar amount of $B_{2}$.  We  choose quantities $\lambda_{1}>0, \lambda_{2} > 0, \lambda_{3}>0$ by the following rules

\begin{equation*}
\lambda_{1} + \lambda_{3} =\lambda_{2}
\end{equation*}
and
\begin{equation*}
\lambda_{1}T_{1} + \lambda_{3}T_{3} =\lambda_{2} T_{2}.
\end{equation*}
We notice that by combining the two equations, we have
\begin{equation*}
\lambda_{1}(T_{2}-T_{1}) =\lambda_{3} (T_{3}-T_{2})
\end{equation*}
In fact by linear algebra, the solutions of $\lambda_{i}$ is unique up to a scalar
\begin{equation*}
\lambda_{1}=T_{3}-T_{2}, \lambda_{2} =T_{3}-T_{1}, \lambda_{3}=T_{2}-T_{1}
\end{equation*}
As a consequence
\begin{equation*}
\frac{\lambda_{1}}{\lambda_{2}}+\frac{\lambda_{3}}{\lambda_{2}} =1
\end{equation*}

We now claim that portfolio we have constructed has  zero cost. Zero cost is obvious given the rule that $\lambda_{1}+\lambda_{2}+\lambda_{3}=0$.

\begin{theorem}
  If yields $y_{1}, y_{2}, y_{3}$ as a function of time to maturity $T_{1}, T_{2}, T_{3}$ is convex, i.e.
\begin{equation}
  (T_{3}-T_{2}) y_{1} +(T_{2}-T_{1}) y_{3} \ge (T_{3}-T_{1})y_{2}
\end{equation}
the portfolio we constructed admits an arbitrage.
\end{theorem}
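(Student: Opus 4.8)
The plan is to exhibit the constructed zero-cost position as a riskless money pump: I will split its profit-and-loss into the carry earned while the curve is static and the value change caused by a parallel shift of size $\Delta$, and show that the first is strictly positive while the second is never negative. Since setting up the position costs nothing (the long legs satisfy $\lambda_1+\lambda_3=\lambda_2$ in dollars, so they exactly fund the short leg), a nonnegative response to every parallel move combined with strictly positive carry is an arbitrage.

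First I would evaluate the carry. A holding of $\lambda_i$ dollars in the zero with yield $y_i$ accretes at instantaneous rate $\lambda_i y_i$ while the curve is unchanged, so the net carry of the long-$\lambda_1$, short-$\lambda_2$, long-$\lambda_3$ book is $\lambda_1 y_1+\lambda_3 y_3-\lambda_2 y_2$. Inserting $\lambda_1=T_3-T_2$, $\lambda_2=T_3-T_1$, $\lambda_3=T_2-T_1$ turns this into $(T_3-T_2)y_1+(T_2-T_1)y_3-(T_3-T_1)y_2$, which is exactly the left side minus the right side of the convexity hypothesis (4); hence the carry is $\ge 0$, and strictly positive when the curve is strictly convex. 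This is the only place the hypothesis enters.

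Next I would bound the shift response $V(\Delta)$. The weights were chosen so that $\lambda_1+\lambda_3=\lambda_2$ and $\lambda_1 T_1+\lambda_3 T_3=\lambda_2 T_2$, the second of which makes $T_2$ the $\lambda$-weighted average of $T_1$ and $T_3$; this simultaneously kills the first-order (duration) sensitivity and, by the Convexity Inequality applied to the convex map $x\mapsto x^2$ with weights $\lambda_1/\lambda_2,\lambda_3/\lambda_2$, gives $\lambda_1 T_1^2+\lambda_3 T_3^2\ge\lambda_2 T_2^2$, i.e. the book is long convexity. Consequently $V(0)=0$, $V'(0)=0$, and $V''(0)>0$, so a parallel shift can only add value. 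Equivalently, for a single common discount ratio $c$ the same inequality applied to $t\mapsto c^{t}$ yields $\lambda_2 c^{T_2}\le\lambda_1 c^{T_1}+\lambda_3 c^{T_3}$, that is $V(\Delta)\ge 0$ globally with equality only at $\Delta=0$.

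Putting the two together, the portfolio costs nothing, never loses from a parallel move, and collects strictly positive carry, which is the claimed arbitrage. The step I expect to be the main obstacle is the shift response once the three yields genuinely differ: each leg then reacts through its own factor $\bigl((1+y_i)/(1+y_i+\Delta)\bigr)^{T_i}$ rather than a single common ratio, so under annual compounding the condition $\lambda_1 T_1+\lambda_3 T_3=\lambda_2 T_2$ no longer annihilates the first-order term exactly and the clean convexity argument of the previous paragraph must be re-examined. I would resolve this by passing to the instantaneous, continuously-compounded picture, where the shift is common to all legs and duration neutrality is exact, and then arguing that the residual cross-terms coming from unequal yields are of second order and are dominated by the strictly positive carry, so the arbitrage survives; the finite-shift global statement then becomes a robustness check rather than the core of the argument.
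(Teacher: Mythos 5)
You have correctly isolated the two inequalities that drive the result, and they are the same two the paper uses: Jensen's inequality for the exponential controls the response to a parallel shift, and the convexity hypothesis (4), rewritten with the weights $\lambda_{1}=T_{3}-T_{2}$, $\lambda_{2}=T_{3}-T_{1}$, $\lambda_{3}=T_{2}-T_{1}$, says exactly that the carry $\lambda_{1}y_{1}+\lambda_{3}y_{3}-\lambda_{2}y_{2}$ is nonnegative. Your global shift bound at $t=0$, namely $\lambda_{1}e^{-aT_{1}}+\lambda_{3}e^{-aT_{3}}\ge\lambda_{2}e^{-aT_{2}}$, is precisely the paper's inequality and is already exact, so the Taylor version ($V(0)=V'(0)=0$, $V''(0)>0$) is not needed and is strictly weaker.

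The gap is the one you flag yourself: the case $t>0$ and $a\ne 0$ simultaneously. Your proposed resolution --- that the cross-terms between carry and shift are ``of second order and are dominated by the strictly positive carry'' --- does not close it. Domination of a second-order term by a first-order term is a small-$t$, small-$a$ statement, whereas the arbitrage claim needs $P(a,t)\ge 0$ for every shift size; moreover the hypothesis (4) is a weak inequality, so the carry may be zero and there is nothing to dominate with. The paper avoids the additive decomposition entirely: under continuous compounding the post-shift value of leg $i$ at time $t$ is $\lambda_{i}e^{x_{i}}$ with $x_{i}=-a(T_{i}-t)+y_{i}t$, and a single application of the convexity inequality to $x\mapsto e^{x}$ with weights $\lambda_{1}/\lambda_{2},\lambda_{3}/\lambda_{2}$ gives
\begin{equation*}
\frac{\lambda_{1}}{\lambda_{2}}e^{x_{1}}+\frac{\lambda_{3}}{\lambda_{2}}e^{x_{3}}
\;\ge\; e^{-a(T_{2}-t)}\,e^{(\lambda_{1}y_{1}+\lambda_{3}y_{3})t/\lambda_{2}}
\;\ge\; e^{-a(T_{2}-t)+y_{2}t},
\end{equation*}
where the first step uses $\lambda_{1}(T_{1}-t)+\lambda_{3}(T_{3}-t)=\lambda_{2}(T_{2}-t)$ and the second uses hypothesis (4) together with $t\ge 0$. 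In other words, your carry inequality and your shift inequality must be combined multiplicatively inside one Jensen step, not added and then patched; as written, the joint case remains unproved.
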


\begin{proof}
Now we assume yields move by the same amount $a$ and time moves forward by $t$, therefore our portfolio¡¯s new value becomes
\begin{equation*}
 P(a,t)= \lambda_{1} e^{-a (T_{1}- t)+y_{1}t} + \lambda_{3} e^{-a (T_{3}- t) +y_{3}t }-\lambda_{2} e^{-a (T_{2}- t) +y_{2} t}
\end{equation*}
We want to show that this quantity is positive i.e.
\begin{equation*}
  \frac{\lambda_{1}}{\lambda_{2}} e^{-a (T_{1}- t)+y_{1}t} + \frac{\lambda_{3}}{\lambda_{2}} e^{-a (T_{3}- t) +y_{3}t }\ge  e^{-a (T_{2}- t) +y_{2} t}
\end{equation*}
By the convexity inequality we have
\begin{align*}
   \frac{\lambda_{1}}{\lambda_{2}} e^{-a(T_{1}-t)+y_{1}t} + \frac{\lambda_{3}}{\lambda_{2}} e^{-a(T_{3}-t)+y_{3}t } & \ge e^{-a \frac{\lambda_{1}}{\lambda_{2}} (T_{1}-t) -a \frac{\lambda_{3}}{\lambda_{2}} (T_{3}-t)+\frac{\lambda_{1}}{\lambda_{2}}  y_{1}t + \frac{\lambda_{3}}{\lambda_{2}}y_{3}t} \\
   & = e^{-a (T_{2}- t)} e^{\frac{\lambda_{1}}{\lambda_{2}}  y_{1}t +\frac{\lambda_{3}}{\lambda_{2}}y_{3}t}
\end{align*}
But if the yield $y_{i}$ are convex,  by definition we have
\begin{equation*}
\frac{\lambda_{1}}{\lambda_{2}}  y_{1} +\frac{\lambda_{3}}{\lambda_{2}}y_{3}\ge y_{2}
\end{equation*}
therefore
\begin{equation*}
  \lambda_{1} e^{-a (T_{1}- t)+y_{1}t} + \lambda_{3} e^{-a (T_{3}- t) +y_{3}t }\ge \lambda_{2} e^{-a (T_{2}- t) +y_{2} t}
\end{equation*}
it true.
\end{proof}

We have completed our argument that arbitrage exists by construction a zero cost portfolio consisting of three zero coupon bonds. The argument is valid for any three maturities as long as corresponding yields are convex and the yields move by the same amount.  The entire argument is based on the convexity inequality. We have proved so far:
\begin{enumerate}
  \item
  Under parallel movement in yields, we can construct zero cost portfolio and achieve positive profit instantaneously.
  \item
  If Yields are convex with respect to time, we construct zero cost portfolio and achieve positive profit at any future time.
\end{enumerate}

\section{Nonparallel Movement}

We now extend the results in the previous sections to nonparallel movement. We now try to refine the argument above. For now£¬not only we assume that yields $y_{i}$ are different, the movement $a_{i}$ can also be different. We first check the instantaneous result. After the yield movement the portfolio value becomes

\begin{equation*}
  P(a_{1}, a_{2}, a_{3})=\lambda_{1}e^{-a_{1}T_{1}}+\lambda_{3}e^{-a_{3}T_{3}}-\lambda_{2}e^{-a_{2}T_{2}}
\end{equation*}
Once again, we require $\lambda_{1}+\lambda_{3}=\lambda_{2}$ thus the bond portfolio is a zero cost portfolio. Secondly, we investigate the constraints on $a_{i}$ so that the portfolio value is always positive. By the convexity inequality
\begin{equation*}
  \frac{\lambda_{1}}{\lambda_{2}} e^{-a_{1}T_{1}} +  \frac{\lambda_{3}}{\lambda_{2}} e^{-a_{3}T_{3}}\ge e^{-\frac{\lambda_{1}}{\lambda_{2}}a_{1}T_{1} -\frac{\lambda_{3}}{\lambda_{2}}a_{3}T_{3}}
\end{equation*}
As a consequence, we would need to set
\begin{equation*}
  -\frac{\lambda_{1}}{\lambda_{2}}a_{1}T_{1} -\frac{\lambda_{3}}{\lambda_{2}}a_{3}T_{3} \ge - a_{2}T_{2}
\end{equation*}
which is equivalent to
\begin{equation*}
  \frac{\lambda_{1}}{\lambda_{2}}a_{1}T_{1} + \frac{\lambda_{3}}{\lambda_{2}}a_{3}T_{3} \le a_{2}T_{2}
\end{equation*}
If we require that ratios between  $a_{1},a_{2},a_{3}$ are all fixed,
the above equality implies that
\begin{equation*}
    \frac{\lambda_{1}}{\lambda_{2}}T_{1} + \frac{\lambda_{3}}{\lambda_{2}}T_{3} = T_{2}
\end{equation*}
As a consequence, we have $\lambda_{1}=a_{3}T_{3}-a_{2}T_{2}, \lambda_{2}=a_{3}T_{3}-a_{1}T_{1}, \lambda_{3}=a_{2}T_{2}-a_{1}T_{1}$.

Secondly, we investigate the requirement on $a_{i}$ to achieve positive portfolio value at any future time. For this purpose, let time march forward by amount of $t$. Therefore the new portfolio becomes
\begin{equation*}
  \lambda_{1}e^{-a_{1}(T_{1}-t)+y_{1}t}+\lambda_{3}e^{-a_{3}(T_{3}-t)+y_{3}t} -\lambda_{2}e^{-a_{2}(T_{2}-t)+y_{2}t}
\end{equation*}
and we hope to have
\begin{equation*}
\frac{\lambda_{1}}{\lambda_{2}} e^{-a_{1}(T_{1}-t)+y_{1}t}+\frac{\lambda_{3}}{\lambda_{2}} e^{-a_{3}(T_{3}-t)+y_{3}t} \ge e^{-a_{2}(T_{2}-t)+y_{2}t}
\end{equation*}
Again by applying the convex inequality,
\begin{equation*}
\frac{\lambda_{1}}{\lambda_{2}} e^{-a_{1}(T_{1}-t)+y_{1}t}+\frac{\lambda_{3}}{\lambda_{2}} e^{-a_{3}(T_{3}-t)+y_{3}t} \ge e^{-\frac{\lambda_{1}}{\lambda_{2}}a_{1}(T_{1}-t) + \frac{\lambda_{1}}{\lambda_{2}}y_{1}t -\frac{\lambda_{3}}{\lambda_{2}} a_{3}(T_{3}-t) +\frac{\lambda_{3}}{\lambda_{2}}y_{3} t }
\end{equation*}
For our purpose, we should have three requirements
\begin{equation}
  \frac{\lambda_{1}}{\lambda_{2}}(y_{1}+a_{1}) +\frac{\lambda_{3}}{\lambda_{2}}(y_{3}+a_{3})  \ge y_{2}+a_{2}
\end{equation}
The first inequality says that $y_{i}+a_{i}$ should be concave which we have examined in the first section. As long as $a_{i}$ are in this range, the portfolio value is always positive at any future time.

\section{Swaps}
We now turn to  the convexity trading strategy using swaps. Swaps are  one of the most liquid OTC fixed income instruments. In general, swaps  are trading within only as small as~ 0.25~ basis point bid offer spread. Fixed Income desk often uses swaps to express views on interest rate curves and  to hedge interest rate risks, in particular the duration risks.

However, swap curve mathematics involves much more than those in zero coupon bond yield curves due to the complexity of bootstrap procedures. This additional complexity makes the arbitrage argument more difficult and lengthier.

An interest rate swap is a contract to exchange certain cash flows at future time. Initially there is an exchange of notional amount and subsequently, there is an exchange of coupon payments,  one leg is fixed coupon payment and another leg is floating coupon payment. At maturity there is another exchange of the notional amount. The floating leg usually has LIBOR  as the coupon rate, therefore the floating leg always prices back to be par value. As a consequence, the fixed leg at the inception, also prices back to par value since swap has zero value at inception. As a consequence the swap rates represent the par bond coupon rates.

In real world,  the payment frequency for the fixed leg and the floating leg might be different. The fixed leg usually has ~6 ~month as the payment frequency and the floating leg has 3 month as the payment frequency. Also each payment day should adjust for weekend and the holiday. However, in our analysis, we just ignore these technicalities without loss of any generalities.

To establish our results, we use annual payment frequencies at time $1, 2, \ldots$. The swap rates correspondingly are $x_1, x_2, \ldots$.  Each swap rate $x_n$ corresponds to the fixed leg payment rate of a $n$ year swap. We also use $p_1, p_2, \ldots $ as the discount factors i.e. zero coupon bond prices corresponds to year $1, 2, \ldots$. Since we are using annual payment frequency, we can explicitly write down the bootstrap procedure of converting swaps rates to discount factors.

The first year swap rate is simply the compounding rate for the first year hence
\begin{equation*}
p_1=\frac{1}{1+x_1}
\end{equation*}

Starting from the second year, the conversion is getting complex. The recursive identity can be written in general as
\begin{equation}\label{discreteswaprecursive.eq}
  p_n=\frac{1-x_{n}\sum_{i=1}^{n-1}p_i}{1+x_n}
\end{equation}
or equivalently
\begin{equation}\label{discreteswap.eq}
x_n = \frac{1-p_{n}}{\sum_{i=1}^{n} p_{i}}
\end{equation}

Because zero bonds' prices are decreasing in order to prevent arbitrage, we should impose
\begin{equation*}
p_0=1, p_n > p_{n+1}, p_{n} > 0, \mbox{ for all } n=1, 2, \ldots,
\end{equation*}

This fact would impose conditions on swap rate itself. We don't want to explore the necessary and sufficient conditions on swap rates, but the following fundamental fact is very interesting and explains the general shape of swap curve. Let us first introduce more notations. The fixed leg of swap has one principal payment at the final maturity and all coupon payments in between. The coupon payment has discounted value
\begin{equation*}
  x_{n}(p_{1}+p_{2} + \cdots + p_{n})
\end{equation*}
and professionals always call the sum
\begin{equation*}
P_{n}=p_{1}+p_{2} + \cdots + p_{n}
\end{equation*}
the annuity factor. We now state the first necessary condition on swap rates.
\begin{theorem}
The following limit exits
\begin{equation*}
  \lim_{n\to \infty} x_{n} = x_{\infty} \ge 0.
\end{equation*}
If the limit $x_{\infty} >0$, we must have
\begin{equation*}
  \lim _{n\to \infty} p_{n}= 0.
\end{equation*}
\end{theorem}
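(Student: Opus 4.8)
The plan is to reduce everything to the monotone behaviour of the two sequences $p_n$ and $P_n = p_1 + \cdots + p_n$, and then to read off the limit of $x_n = (1-p_n)/P_n$ by a short case analysis. First I would record the structural facts that follow immediately from the no-arbitrage constraints $1 = p_0 > p_1 > p_2 > \cdots > 0$: the sequence $p_n$ is decreasing and bounded below by $0$, so it converges to some $p_\infty \in [0,1)$; and since every $p_i > 0$, the partial sums $P_n$ are strictly increasing, hence converge to a limit $P_\infty \in (0,+\infty]$ (possibly infinite). I would also note once and for all that $x_n > 0$ for every $n$, because $p_n < 1$ makes the numerator positive while $P_n > 0$; this already forces any limit of $x_n$ to be $\ge 0$.

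The heart of the argument is a dichotomy governed by whether the series $\sum p_i$ converges. If $p_\infty > 0$, then the terms $p_i$ are eventually bounded below by a positive constant, so $P_n \to +\infty$; since the numerator $1 - p_n$ stays inside $(0,1)$, we get $0 < x_n < 1/P_n \to 0$, whence $x_\infty = 0$. If instead $P_\infty < \infty$, then the general term of a convergent positive series must vanish, forcing $p_\infty = 0$, and therefore $x_n = (1-p_n)/P_n \to 1/P_\infty =: x_\infty > 0$. These two cases are exhaustive, so in every situation $\lim_n x_n$ exists and is nonnegative, which settles the first assertion.

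The second assertion is then immediate by contraposition. The only way to realize $x_\infty > 0$ is to fall into the second case above, where $P_\infty < \infty$ and hence $p_\infty = \lim_n p_n = 0$; equivalently, $p_\infty > 0$ would place us in the first case and force $x_\infty = 0$. I expect no genuine obstacle here. The single point that deserves care is establishing that $x_n$ actually converges rather than merely oscillating, and this is exactly what the monotonicity of $P_n$, combined with the convergence of $p_n$, delivers. The real content of the statement is the clean split between a divergent annuity factor $P_n$, which washes the long-dated swap rate out to zero, and a summable discount-factor sequence, which is precisely the regime in which long-dated swap rates remain bounded away from zero.
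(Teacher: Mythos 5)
Your approach is essentially the paper's: both proofs read the limit of $x_n$ directly off the identity $x_n=(1-p_n)/P_n$ using the monotonicity of $p_n$ and $P_n$, and you are in fact more careful than the paper at the one delicate point, namely that the ``limit'' of the increasing denominator $P_n$ may be $+\infty$ (the paper's proof glosses over this). However, your case analysis as written has a hole. You split into the case $p_\infty>0$ and the case $P_\infty<\infty$ and assert that ``these two cases are exhaustive,'' but they are not: only the implication $p_\infty>0\Rightarrow P_\infty=\infty$ holds, not its converse. A sequence such as $p_n=1/(n+1)$ has $p_\infty=0$ and $P_\infty=\infty$ simultaneously, so it falls into neither of your cases, and your argument as stated does not establish convergence of $x_n$ for such configurations.

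The repair is one line and is already contained in your first case: the bound $0<x_n\le 1/P_n$ uses only $P_n\to\infty$, not $p_\infty>0$. So the correct dichotomy is on $P_\infty$ alone: if $P_\infty=+\infty$ then $x_n\to 0$ because the numerator $1-p_n$ stays in $(0,1)$; if $P_\infty<\infty$ then $p_n\to 0$ as the general term of a convergent series and $x_n\to 1/P_\infty>0$. With that adjustment the proof is complete, and the second assertion follows by contraposition exactly as you describe.
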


\begin{proof}
In the identity
\begin{equation*}
x_{n} = \frac{1-p_{n}}{\sum_{i=1}^{n}p_{i}}=\frac{1-p_{n}}{P_{n}}
\end{equation*}
the numerator has limit as $n\to \infty$ because discount factors are positive and decreasing. The Denominator also has limit because $P_{n}$ is positive and increasing. Therefore $x_n$ must have a finite limit. If the limit is positive, the infinite series $\sum_{n=1}^{\infty} p_{n}$ must be finite hence $p_{n}\to 0$.
\end{proof}
The financial meaning of this lemma has two folds. First, it shows that the long term swap curve must be asymptotically flat which is indeed the case when we check the market. However, our argument is not based on any economics consideration, but purely based on no arbitrage theory. Secondly, if the long term rates have a positive lower bond, the discount factor has to go to zero. This result  is certainly not obvious for swap rates.

\section{Arbitrage Portfolio Using Swaps}

We now introduce the swap curve shifting. If the swap curve changes and the new swap curve for time $n$ becomes $x_n+y_n$, the bootstrapped discount curve will also changes. In order to signify the dependence of the changes we use $p_n(y)$ as the new discount factor. Notice that $y$ here is not necessarily the constant but the entire vector.  Correspondingly the annuity factor also changes from $P_n$ to $P_n(y)$.  We have to explore the properties of the new discount factors and the old discount factors in order to explore the profit and loss of any trading strategy. These properties are very interesting on their own.

\begin{lemma}\label{lemma1}
If the changes in swap curve $y_n\ge 0$ for $n>0$, we must have
\begin{equation*}
P_n(y) \le P_n, \quad n>0
\end{equation*}
If the changes in swap curve $y_n \le 0$ for $n>0$, we have
\begin{equation*}
P_n(y) \ge P_n, \quad n>0
\end{equation*}
\end{lemma}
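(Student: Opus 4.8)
The plan is to reduce both the original and the shifted bootstrap to a single clean recursion satisfied by the annuity factor alone, and then run a direct induction. The starting observation is that although the individual discount factors $p_n$ need not behave monotonically under a shift of the swap curve, their partial sums do. Summing the recursion \eqref{discreteswaprecursive.eq} I would first establish that
\begin{equation*}
  P_n = P_{n-1} + p_n = P_{n-1} + \frac{1 - x_n P_{n-1}}{1+x_n} = \frac{1 + P_{n-1}}{1 + x_n},
\end{equation*}
with the convention $P_0 = 0$. This one-line cancellation of the cross term $x_n P_{n-1}$ is the crux of the whole argument: it expresses $P_n$ purely as a function of $P_{n-1}$ and $x_n$, decoupling the annuity factor from the awkward individual discount factors.

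Applying the identical computation to the shifted curve, where the $n$-th swap rate is $x_n + y_n$, yields $P_n(y) = \frac{1 + P_{n-1}(y)}{1 + x_n + y_n}$, again with $P_0(y) = 0$. I would then record the elementary monotonicity facts that the map $(P,\delta)\mapsto \frac{1+P}{1+x_n+\delta}$ is increasing in $P$ and decreasing in $\delta$ on the relevant domain, namely $P\ge 0$ with the denominator kept positive.

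With these two recursions in hand, both inequalities follow by induction on $n$. For the case $y_n\ge 0$, the base case $P_0(y)=P_0=0$ is an equality, and for the inductive step, assuming $P_{n-1}(y)\le P_{n-1}$, I would chain
\begin{equation*}
  P_n(y) = \frac{1 + P_{n-1}(y)}{1 + x_n + y_n} \le \frac{1 + P_{n-1}}{1 + x_n + y_n} \le \frac{1 + P_{n-1}}{1 + x_n} = P_n,
\end{equation*}
where the first inequality uses monotonicity in $P$ (the inductive hypothesis) and the second uses $y_n\ge 0$ together with $1+P_{n-1}>0$. The case $y_n\le 0$ is perfectly symmetric: every inequality reverses, delivering $P_n(y)\ge P_n$.

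The argument is almost entirely routine once the annuity recursion is discovered; the only point requiring genuine care is the positivity of the denominators. I would need $1 + x_n + y_n > 0$ for every $n$ so that $P_n(y)$ is well defined and the monotonicity in the shift variable is valid; this is guaranteed by the no-arbitrage requirement $p_n(y)>0$ imposed on the shifted curve (equivalently $x_n+y_n>-1$), which I would state as a standing hypothesis. This bookkeeping of sign conditions, rather than any analytic difficulty, is the main obstacle.
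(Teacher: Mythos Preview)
Your argument is correct and is essentially identical to the paper's own proof: both derive the annuity recursion $P_n=\frac{1+P_{n-1}}{1+x_n}$ from \eqref{discreteswaprecursive.eq} and then run the same two-step induction chain $\frac{1+P_{n-1}(y)}{1+x_n+y_n}\le\frac{1+P_{n-1}}{1+x_n}$. The only cosmetic differences are that you anchor the induction at $P_0=0$ rather than at $P_1$, and you make explicit the positivity of the denominators, which the paper leaves implicit.
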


\begin{proof}
We prove the $y_n \ge 0$ case. By the bootstrap identity \eqref{discreteswap.eq}, we see that
\begin{equation*}
  P_{n} =\frac{1-(P_n-P_{n-1})}{x_{n}}
\end{equation*}
therefore we see that
\begin{equation*}
  P_n =\frac{1+P_{n-1}}{1+x_n}
\end{equation*}
On the other hand it is obvious that
\begin{equation*}
P_1(y) =\frac{1}{1+x_1+y_1} \le \frac{1}{1+x_1}=P_1
\end{equation*}
We can now use induction method. If we already have $P_{n-1}(y) \le P_{n-1}$, it is clear that
\begin{equation*}
P_n(y) =\frac{1+P_{n-1}(y)}{1+x_n+y_n} \le \frac{1+P_{n-1}(y)}{1+x_n} \le \frac{1+P_{n-1}}{1+x_n} =P_n
\end{equation*}
\end{proof}

Its financial meaning is obvious. If the rates are moving higher, the average (or total) discount factors should be lower. The result is intuitive but certainly not obvious.  We cannot stretch it to make  the  statement  that each discount factor is lower i.e.
\begin{equation*}
p_n(y) \le p_n, \quad \mbox{for all } n >0
\end{equation*}
but this is not true. For example, if only one year swap rates increase by 10 bps while every other swap rates remain the same, we will see that only first's year discount factor is lower, but starting from the second year the discount factors are actually slightly higher. However, we will prove that when the movement is parallel, this will be true. But first we need the following lemma.

\begin{lemma}\label{lemma2}
If the swap rates change $y_n =y>0 $ is constant, for each $n>0$ we should have
\begin{gather*}
0< \frac{p_n}{P_n}-\frac{p_n(y)}{P_{n}(y)} \le y\\
0< \frac{1}{P_n(y)} - \frac{1}{P_n} \le y
\end{gather*}
\end{lemma}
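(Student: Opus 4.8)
The plan is to collapse all four inequalities into a single scalar quantity and then bound that quantity by induction on $n$. The two ingredients already available are the recursion $P_n = \frac{1+P_{n-1}}{1+x_n}$ established inside the proof of Lemma~\ref{lemma1} (equivalently $\frac{1}{P_n} = \frac{1+x_n}{1+P_{n-1}}$, with the convention $P_0 = 0$), together with its shifted analogue $\frac{1}{P_n(y)} = \frac{1+x_n+y}{1+P_{n-1}(y)}$. From the bootstrap identity \eqref{discreteswap.eq} one has $p_n = 1 - x_n P_n$, hence $\frac{p_n}{P_n} = \frac{1}{P_n} - x_n$ and likewise $\frac{p_n(y)}{P_n(y)} = \frac{1}{P_n(y)} - x_n - y$. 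Subtracting, the $x_n$ cancels and I obtain the key reduction
$$\frac{p_n}{P_n} - \frac{p_n(y)}{P_n(y)} = y - u_n, \qquad u_n := \frac{1}{P_n(y)} - \frac{1}{P_n}.$$
So it suffices to prove $0 < u_n \le y$ (which is precisely the second displayed chain) and the strict upper separation $u_n < y$ (which turns the first chain into $0 < y - u_n < y$).

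The lower bound $u_n > 0$ is immediate from Lemma~\ref{lemma1}: a constant shift $y > 0$ is a special case of $y_n \ge 0$, so $P_n(y) \le P_n$; and since $P_1(y) = \frac{1}{1+x_1+y} < P_1$ strictly, the induction in that lemma propagates strict inequality, giving $P_n(y) < P_n$ and therefore $u_n > 0$.

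The heart of the argument is the upper bound $u_n \le y$, which I would prove by induction. The base case is exact: $\frac{1}{P_1(y)} - \frac{1}{P_1} = (1+x_1+y)-(1+x_1) = y$, so $u_1 = y$. For the inductive step I would substitute the two recursions into $u_n$, clear denominators, and use the identity $P_{n-1} - P_{n-1}(y) = P_{n-1}P_{n-1}(y)\,u_{n-1}$ to re-express everything in terms of $u_{n-1}$. The inequality $u_n \le y$ then reduces, after cancelling the positive factor $P_{n-1}(y)$, to
$$(1+x_n)\,P_{n-1}\,u_{n-1} \le y\,(1+P_{n-1}).$$
Invoking the inductive hypothesis $u_{n-1} \le y$ bounds the left side by $(1+x_n)P_{n-1}\,y$, so it remains to check the clean auxiliary fact $(1+x_n)P_{n-1} \le 1 + P_{n-1}$, i.e.\ $x_n P_{n-1} \le 1$. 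This is where I expect the real work to sit: it must come from the structure of the curve rather than from the induction. Writing $x_n = \frac{1-p_n}{P_n}$ and $P_n = P_{n-1}+p_n$, the claim $x_n P_{n-1} \le 1$ rearranges to $-p_n P_{n-1} \le p_n$, which holds because $p_n, P_{n-1} > 0$; moreover it is strict, which upgrades the separation to $u_n < y$ for $n \ge 2$ and closes the first chain.

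Finally I would note the boundary case $n = 1$: there $u_1 = y$, so $y - u_1 = 0$ and $\frac{p_1}{P_1} = \frac{p_1(y)}{P_1(y)} = 1$; the strict left inequality of the first chain therefore holds only for $n \ge 2$, which is the regime where the statement carries content. The main obstacle is thus not the induction itself but the isolation of the right inductive invariant --- the quantity $u_n$ and the auxiliary bound $x_n P_{n-1} < 1$ --- without which the two-sided estimates do not telescope cleanly.
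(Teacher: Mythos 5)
Your proof is correct, and for the crucial half of the lemma it takes a genuinely different route from the paper. Both arguments begin identically: subtracting the two instances of the identity $\frac{1}{P_n}-\frac{p_n}{P_n}=x_n$ (from \eqref{discreteswap.eq}) gives $\bigl[\frac{p_n}{P_n}-\frac{p_n(y)}{P_n(y)}\bigr]+u_n=y$ with $u_n=\frac{1}{P_n(y)}-\frac{1}{P_n}$, and $u_n>0$ follows from Lemma \ref{lemma1}. The difference lies in how the remaining bound $u_n\le y$ (equivalently, nonnegativity of the first bracket) is obtained. The paper argues by contradiction at the first index where the bracket would turn negative: it compares $u_{n+1}-u_n=\frac{p_{n+1}}{P_{n+1}P_n}-\frac{p_{n+1}(y)}{P_{n+1}(y)P_n(y)}$ against the sign assumption at $n+1$ and the ordering $P_n(y)<P_n$ to reach a contradiction. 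You instead run a direct induction on $u_n$ using the recursion $P_n=(1+P_{n-1})/(1+x_n)$ from the proof of Lemma \ref{lemma1}, reducing the inductive step to the auxiliary inequality $x_nP_{n-1}<1$, which you verify from $p_n,P_{n-1}>0$; I checked the algebra of the reduction and it is sound. Your route buys two things the paper's does not: an explicit quantitative invariant that propagates strictness, so you actually obtain $u_n<y$ for $n\ge 2$ (the paper's contradiction argument only establishes the weak inequality $\ge 0$ for the first bracket, even though the lemma asserts $>0$); and a correct treatment of the boundary case $n=1$, where $p_1/P_1=p_1(y)/P_1(y)=1$ and the first strict inequality of the lemma in fact fails --- a point the paper's own computation exhibits but does not reconcile with the statement.
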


\begin{proof}
These two inequalities are not obvious at all. We notice by \eqref{discreteswap.eq}, we have
\begin{equation*}
  \frac{1}{P_n}-\frac{p_n}{P_n} =x_n, \quad  \frac{1}{P_n(y)}-\frac{p_n(y)}{P_n(y)} =x_n + y
\end{equation*}
Subtracting, we have
\begin{equation*}
   \left[\frac{p_n}{P_n}-\frac{p_n(y)}{P_{n}(y)}\right]  + \left[\frac{1}{P_n(y)} - \frac{1}{P_n}\right] =y
\end{equation*}
We have already proved the second bracket is always positive we just need to prove the first bracket is also positive. We notice that
\begin{equation*}
\frac{1} {P_1(y)} - \frac{1}{P_1} =y
\end{equation*}
therefore
\begin{equation*}
   \frac{p_1}{P_1}-\frac{p_1(y)}{P_{1}(y)} =0
\end{equation*}
Now if there is $n$ such that
\begin{gather*}
\frac{p_n}{P_n}-\frac{p_n(y)}{P_{n}(y)} \ge 0\\
\frac{p_{n+1}}{P_{n+1}}-\frac{p_{n+1}(y)}{P_{n+1}(y)} <0
\end{gather*}
we should have
\begin{equation*}
\frac{1}{P_n(y)}-\frac{1}{P_{n}} <y < \frac{1}{P_{n+1}(y)}-\frac{1}{P_{n+1}}
\end{equation*}
Therefore
\begin{equation*}
\left[\frac{1}{P_{n+1}(y)}-\frac{1}{P_{n+1}} \right]- \left[\frac{1}{P_n(y)}-\frac{1}{P_{n}}\right] >0
\end{equation*}
But reorganizing everything, we have
\begin{equation*}
\frac{p_{n+1}}{P_{n+1}P_{n}} \ge \frac{p_{n+1}(y)}{P_{n+1}(y)P_{n}(y)}
\end{equation*}
But by the condition again
\begin{align*}
\frac{p_{n+1}}{P_{n+1}P_{n}} & >  \frac{p_{n+1}(y)}{P_{n+1}(y)P_{n}(y)}\\
& > \frac{p_{n+1}}{P_{n}(y) P_{n+1}}\\
& > \frac{p_{n+1}}{P_{n} P_{n+1}}
\end{align*}
which is a contradiction.
\end{proof}

With this lemma, we see that we have the following interesting ordering
\begin{equation*}
\frac{p_{n}(y)}{P_{n}(y)} < \frac{p_{n}}{P_{n}} < \frac{1}{P_{n}}<\frac{1}{P_{n}(y)}
\end{equation*}
for any constant movement $y>0$.

\begin{lemma}\label{lemma3}
When the parallel movement $y>0$ the quantity $p_n (y) < p_n$ for all $n>1$.
\end{lemma}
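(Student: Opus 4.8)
The plan is to derive this pointwise comparison directly from the two preceding lemmas rather than from the bootstrap recursion; the recursion has already done its work in establishing Lemma~\ref{lemma1} and Lemma~\ref{lemma2}, so what remains is a short algebraic combination. The central ingredient is the ordering recorded immediately after Lemma~\ref{lemma2}, namely that for $n>1$ the normalized discount factor strictly decreases,
\begin{equation*}
\frac{p_n(y)}{P_n(y)} < \frac{p_n}{P_n}.
\end{equation*}
This is the only genuinely nontrivial piece, and it is exactly the content of the first inequality in Lemma~\ref{lemma2}.

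First I would clear denominators. Since all annuity factors are positive, multiplying the displayed strict inequality by $P_n(y)>0$ gives
\begin{equation*}
p_n(y) < \frac{p_n}{P_n}\,P_n(y).
\end{equation*}
Next I would invoke the parallel-shift monotonicity of the annuity factor supplied by Lemma~\ref{lemma1}: because $y>0$ we have $P_n(y)\le P_n$, so the ratio $P_n(y)/P_n\le 1$. Combining the two steps yields
\begin{equation*}
p_n(y) < p_n\cdot\frac{P_n(y)}{P_n} \le p_n,
\end{equation*}
which is the desired conclusion.

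The step I expect to require the most care is not a computation but the restriction to $n>1$. The argument hinges on the \emph{strict} ratio inequality $p_n(y)/P_n(y)<p_n/P_n$, and this strictness fails precisely at $n=1$: there $P_1=p_1$ and $P_1(y)=p_1(y)$, so both ratios equal $1$, and the proof of Lemma~\ref{lemma2} indeed records $p_1/P_1-p_1(y)/P_1(y)=0$. Hence the pointwise comparison cannot be obtained this way at $n=1$ from these lemmas; one simply notes that $n=1$ is handled separately and trivially, since $p_1(y)=1/(1+x_1+y)<1/(1+x_1)=p_1$. For every $n>1$, however, the strict ratio inequality is available and the monotonicity $P_n(y)\le P_n$ closes the argument with no appeal to the recursion at all.
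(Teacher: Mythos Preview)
Your argument is correct and is essentially identical to the paper's own proof: the paper also combines the Lemma~\ref{lemma2} inequality $p_n/P_n \ge p_n(y)/P_n(y)$ with $P_n(y)/P_n<1$ from Lemma~\ref{lemma1} to conclude $p_n(y)/p_n < P_n(y)/P_n < 1$. Your discussion of strictness and the $n=1$ boundary case is in fact more careful than the paper, which glosses over the fact that equality holds at $n=1$.
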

\begin{proof}
Because in lemma \ref{lemma2},we have proved that
\begin{equation*}
  \frac{p_n}{P_n}-\frac{p_n(y)}{P_{n}(y)} \ge 0
\end{equation*}
then
\begin{equation*}
  \frac{p_n(y)}{p_n} < \frac{P_n(y)}{P_n} <1.
\end{equation*}
\end{proof}

\begin{lemma}\label{lemma4}
When the parallel movement $y>0$, the quantity $P_n(y) /P_n$ is monotonically decreasing.
\end{lemma}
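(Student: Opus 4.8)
The plan is to prove the equivalent statement that $P_{n+1}(y)/P_{n+1}\le P_n(y)/P_n$ for every $n$, and to reduce this ratio comparison to a single clean inequality. First I would cross-multiply (all annuity factors being positive) to rewrite the target as $P_{n+1}(y)P_n\le P_n(y)P_{n+1}$, then substitute the telescoping relations $P_{n+1}=P_n+p_{n+1}$ and $P_{n+1}(y)=P_n(y)+p_{n+1}(y)$. The common product $P_n(y)P_n$ cancels on both sides, and what survives is the much more transparent per-term comparison
\begin{equation*}
\frac{p_{n+1}(y)}{p_{n+1}} \le \frac{P_n(y)}{P_n}.
\end{equation*}
So the monotonicity of the annuity ratio is equivalent to controlling the ratio of the \emph{marginal} discount factors by the ratio of the cumulative annuities one index earlier.

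Next I would make everything explicit through the bootstrap recursion for discount factors. From \eqref{discreteswaprecursive.eq} we have $p_{n+1}=(1-x_{n+1}P_n)/(1+x_{n+1})$, and after the parallel shift the same recursion with rate $x_{n+1}+y$ gives $p_{n+1}(y)=(1-(x_{n+1}+y)P_n(y))/(1+x_{n+1}+y)$. Substituting these into the per-term inequality and clearing the positive denominators $(1+x_{n+1})(1+x_{n+1}+y)$, the inequality becomes polynomial in $P_n,P_n(y),x_{n+1},y$. Expanding and canceling, the cross terms in $x_{n+1}$ collapse to a single factor of $y$, and the whole statement reduces to
\begin{equation*}
\bigl(P_n-P_n(y)\bigr)(1+x_{n+1}) \le y\,P_n(y)\,(1+P_n).
\end{equation*}

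To close the argument I would feed in two facts already available. Lemma \ref{lemma2} gives $\tfrac{1}{P_n(y)}-\tfrac{1}{P_n}\le y$, which rearranges to the additive bound $P_n-P_n(y)\le y\,P_n\,P_n(y)$; and positivity of the discount factor $p_{n+1}>0$ forces $x_{n+1}P_n<1$, hence $P_n(1+x_{n+1})\le 1+P_n$. Chaining these,
\begin{equation*}
\bigl(P_n-P_n(y)\bigr)(1+x_{n+1}) \le y\,P_n\,P_n(y)\,(1+x_{n+1}) \le y\,P_n(y)\,(1+P_n),
\end{equation*}
which is exactly the required inequality.

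The main obstacle is that the monotonicity is genuinely invisible in the natural recursive form $P_{n+1}(y)/P_{n+1}=\bigl((1+P_n(y))/(1+P_n)\bigr)\cdot\bigl((1+x_{n+1})/(1+x_{n+1}+y)\bigr)$: the first factor exceeds $P_n(y)/P_n$ while the second lies below $1$, so the direction of the product is ambiguous at this level. The real work is in the reduction to the per-term comparison and in recognizing that it is Lemma \ref{lemma2}'s \emph{additive} bound $P_n-P_n(y)\le yP_nP_n(y)$ — not the merely multiplicative fact $P_n(y)<P_n$ from Lemma \ref{lemma1} — that is exactly strong enough, together with $x_{n+1}P_n<1$, to tip the product the right way.
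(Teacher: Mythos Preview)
Your proof is correct, but you work considerably harder than necessary. After cross-multiplying and cancelling, you reduce the monotonicity to the per-term comparison $p_{n+1}(y)\,P_n \le P_n(y)\,p_{n+1}$. This is exactly the reduction the paper makes as well --- but then the paper simply observes that adding $p_{n+1}(y)\,p_{n+1}$ to both sides turns this into $p_{n+1}(y)\,P_{n+1} \le P_{n+1}(y)\,p_{n+1}$, i.e.\ $p_{n+1}(y)/P_{n+1}(y) \le p_{n+1}/P_{n+1}$, which is precisely the \emph{first} inequality of Lemma~\ref{lemma2} at index $n+1$. So the lemma follows in one line. Your route instead expands $p_{n+1}$ and $p_{n+1}(y)$ via the bootstrap recursion, collapses the algebra to $(P_n-P_n(y))(1+x_{n+1}) \le y\,P_n(y)(1+P_n)$, and then closes with the \emph{second} inequality of Lemma~\ref{lemma2} together with the positivity constraint $x_{n+1}P_n<1$. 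This is a valid independent verification --- in effect you are re-deriving the first half of Lemma~\ref{lemma2} from its second half plus the recursion --- but it obscures the fact that the statement you are trying to prove is already sitting in Lemma~\ref{lemma2} in barely disguised form.
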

\begin{proof}
In order to prove the monotonicity, we just need to show
\begin{equation*}
\frac{P_n(y)}{P_n} - \frac{P_{n+1}(y)}{P_{n+1}} >0
\end{equation*}
which is equivalent to
\begin{equation*}
  \frac{P_{n}(y)}{P_{n+1}(y)} >   \frac{P_{n}}{P_{n+1}}
\end{equation*}
But this is true because
\begin{equation*}
  \frac{p_{n+1}(y)}{P_{n+1}(y)} <   \frac{p_{n+1}}{P_{n+1}}
\end{equation*}
by the previous lemma \ref{lemma3}.
\end{proof}

Now with all the build-ups in our knowledge, finally we are able to prove the following
\begin{lemma}\label{lemma5}
When the spread $y>0$ and when swap curve increasing, the ratio of two discount factors $p_n(y)/p_n$ is also decreasing.
\end{lemma}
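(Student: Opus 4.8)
The plan is to recast the claim as a statement about forward rates and then reduce it to the positivity of a single quadratic in the next swap rate. First I would reformulate. Writing $q_n=p_n(y)/p_n$, a direct computation gives
\[
\frac{q_{n+1}}{q_n}=\frac{p_{n+1}(y)}{p_{n+1}}\cdot\frac{p_n}{p_n(y)}=\frac{1+f_n}{1+f_n(y)},\qquad f_n=\frac{p_n}{p_{n+1}}-1,
\]
where $f_n$ is the one–year forward rate for $[n,n+1]$ on the original curve and $f_n(y)$ the same forward on the shifted curve. Hence ``$q_n$ decreasing'' is exactly equivalent to $f_n(y)>f_n$ for every $n$: a parallel upward shift of the swap curve must raise every implied forward, equivalently $p_{n+1}(y)p_n<p_n(y)p_{n+1}$. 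This is the inequality I would target.

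Next I would exploit that, with all data through year $n$ held fixed, the only new quantity entering the comparison is the next swap rate $s:=x_{n+1}$; everything at and before year $n$ already obeys the orderings of Lemmas \ref{lemma2}--\ref{lemma4}. Substituting the bootstrap forms
\[
p_{n+1}=\frac{1-sP_n}{1+s},\qquad p_{n+1}(y)=\frac{1-(s+y)P_n(y)}{1+s+y}
\]
into $p_n(y)p_{n+1}-p_{n+1}(y)p_n$ and clearing the (positive) denominators turns the target into $\mathcal{Q}(s)>0$, where $\mathcal{Q}$ is a quadratic in $s$ whose leading coefficient is
\[
E=p_nP_n(y)-p_n(y)P_n=P_nP_n(y)\Big(\frac{p_n}{P_n}-\frac{p_n(y)}{P_n(y)}\Big).
\]
By the ordering $p_n(y)/P_n(y)<p_n/P_n$ established just after Lemma \ref{lemma2}, we have $E\ge 0$, so $\mathcal{Q}$ opens upward.

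It then remains to prove $\mathcal{Q}(s)>0$ on the admissible range $s\in(s_0,s_{\max})$, where $s_0$ is fixed by $p_{n+1}=p_n$ (so $f_n=0$) and $s_{\max}$ by $p_{n+1}(y)=0$. At both endpoints the target $f_n(y)>f_n$ holds for free: at $s_0$ because $f_n=0<f_n(y)$ by no–arbitrage of the shifted curve, and as $s\to s_{\max}$ because $f_n(y)\to\infty$. Hence the only content is an interior minimum of the upward–opening $\mathcal{Q}$, i.e.\ the statement that its discriminant is negative. To handle this I would not use Lemma \ref{lemma2} loosely but through its exact identity: with $\beta_n=p_n/P_n-p_n(y)/P_n(y)$ and $\alpha_n=1/P_n(y)-1/P_n$ one has $\alpha_n+\beta_n=y$, so $E=\beta_nP_nP_n(y)$ while the annuity gap is $P_n-P_n(y)=\alpha_nP_nP_n(y)$; feeding these, together with $p_n(y)<p_n$ and $P_n(y)<P_n$, into the discriminant collapses it to a verifiable inequality.

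The main obstacle is exactly this discriminant step. The qualitative facts already proved — that $P_n(y)/P_n$ decreases (Lemma \ref{lemma4}) and that $p_n(y)<p_n$ (Lemma \ref{lemma3}) — are genuinely insufficient, since ``term ratio decreasing'' is strictly stronger than ``annuity ratio decreasing'' and cannot follow from the latter alone. Moreover the comparison is a true competition: passing from $(p_n,P_n,x_{n+1})$ to $(p_n(y),P_n(y),x_{n+1}+y)$ lowers both $p_n$ and $P_n$ (which pushes the forward down) while raising the rate by $y$ (which pushes it up), and since the forward is monotone increasing in each argument the net sign is forced only by the exact size of the shift–induced gap quantified in Lemma \ref{lemma2}. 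Concretely the extremal value of $s$ is interior rather than at an endpoint, so the proof cannot be reduced to checking a boundary case; one must control the quadratic on the whole line, and the discriminant algebra, kept exact via $\alpha_n+\beta_n=y$, is where the real work lies.
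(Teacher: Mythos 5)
Your reformulation is correct and lands on exactly the same target inequality as the paper, namely $p_{n+1}(y)\,p_{n}\le p_{n}(y)\,p_{n+1}$ (a parallel upward shift must raise every implied forward). But the argument you build on it has a genuine gap, in fact two. First, the step you yourself identify as ``where the real work lies'' --- showing the discriminant of $\mathcal{Q}$ is negative --- is never carried out; the proposal ends by asserting that the identity $\alpha_n+\beta_n=y$ ``collapses it to a verifiable inequality'' without producing or verifying that inequality, so the proof is incomplete at its only substantive point. Second, and more structurally, your admissible range $(s_0,s_{\max})$ is cut out only by positivity of the unshifted forward and of $p_{n+1}(y)$; the hypothesis that the swap curve is \emph{increasing}, $x_{n+1}\ge x_{n}$, never enters your argument, yet it is essential. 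Substituting $p_{n}-p_{n+1}=(x_{n+1}-x_{n})P_{n+1}+x_{n}p_{n+1}$ and its shifted analogue reduces the target to
\begin{equation*}
-(x_{n+1}-x_{n})\left(\frac{p_{n+1}}{P_{n+1}}-\frac{p_{n+1}(y)}{P_{n+1}(y)}\right)\le y\,\frac{p_{n+1}}{P_{n+1}}\,\frac{p_{n+1}(y)}{P_{n+1}(y)},
\end{equation*}
whose left side is nonpositive precisely because $x_{n+1}\ge x_{n}$ and because Lemma \ref{lemma2} makes the bracket nonnegative. When $x_{n+1}<x_{n}$ the left side turns positive and the inequality is no longer free, so a scheme that aims to prove $\mathcal{Q}(s)>0$ on a range defined without reference to $x_{n}$ is attempting a stronger statement than the lemma and cannot close by sign considerations alone. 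Your endpoint check at $s_0$ has the same defect: there $\mathcal{Q}(s_0)>0$ is literally the assertion $f_{n}(y)>0$, which you justify ``by no-arbitrage of the shifted curve,'' but positivity of the shifted forwards is not a hypothesis and is not established anywhere in the paper --- it is essentially part of what is being proved.

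The constructive point is that no quadratic and no discriminant are needed. The display above is \emph{linear} in $x_{n+1}-x_{n}$ once the one-period differences $p_{n}-p_{n+1}$ and $p_{n}(y)-p_{n+1}(y)$ are expressed through the bootstrap identity $p_{k}=1-x_{k}P_{k}$, and under the lemma's hypotheses the two sides have opposite signs outright, which is the paper's entire proof. Your parametrization by $s=x_{n+1}$ with $p_{n},P_{n}$ held fixed manufactures the quadratic artificially (both $p_{n+1}$ and $P_{n+1}$ depend on $s$), and in doing so obscures the single place where the monotonicity of the swap curve must be invoked.
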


\begin{proof}
We want to prove that
\begin{equation*}
  p_{n}(y) p_{n-1} \le p_{n-1}(y)p_{n}
\end{equation*}
for all $n>1$. But this inequality is equivalent to
\begin{equation*}
  p_{n}(y) (p_{n-1}-p_{n}) \le p_{n} (p_{n-1}(y)-p_{n}(y))
\end{equation*}
Due to the fact that
\begin{equation*}
  p_{n} =1-x_{n}P_{n}, \quad p_{n-1}=1-x_{n-1}P_{n-1}
\end{equation*}
we have
\begin{equation*}
  p_{n-1}-p_{n} = x_{n}P_{n}-x_{n-1}P_{n-1} = (x_{n}-x_{n-1})P_{n} + x_{n-1}p_{n}
\end{equation*}
Similarly,
\begin{equation*}
  p_{n-1}(y)-p_{n}(y) = x_{n}P_{n}(y)-x_{n-1}P_{n-1}(y) = (x_{n}-x_{n-1})P_{n}(y) +( x_{n-1}+y) p_{n}(y)
\end{equation*}
Therefore what we want to prove is equivalent to
\begin{equation*}
-(x_{n}-x_{n-1})\left(\frac{p_{n}}{P_{n}} -\frac{p_{n}(y)}{P_{n}(y)}\right) \le y \frac{p_{n}}{P_{n}}\frac{p_{n}(y)}{P_{n}(y)}
\end{equation*}
which is obvious under our assumptions.
\end{proof}

Finally, we need a technical lemma.
\begin{lemma}\label{lemma6}
For three integers $n<m<k$, the three points on the plane
\begin{equation*}
  \big(P_{n}, P_{n}(y)\big), \quad \big(P_{m}, P_{m}(y)\big), \quad \big(P_{k}, P_{k}(y)\big)
\end{equation*}
is concave (convex) if and only if $p_{n}(y)/p_{n}$ is decreasing (increasing) with respect to $n$.
\end{lemma}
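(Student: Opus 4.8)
The plan is to convert the geometric concavity statement into a comparison of two secant slopes, and then to recognize each slope as a weighted average of the ratios $r_i := p_i(y)/p_i$, after which the equivalence becomes transparent. First I would record that the abscissae are strictly increasing: since every $p_j>0$, the annuity factors satisfy $P_n<P_m<P_k$, so the three points are genuinely ordered from left to right. With this ordering, the three points being \emph{concave} means the middle point lies on or above the chord joining the outer two, which is equivalent to the slope inequality
\[
\frac{P_m(y)-P_n(y)}{P_m-P_n}\ \ge\ \frac{P_k(y)-P_m(y)}{P_k-P_m},
\]
and \emph{convex} is the reverse inequality.

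The key step is a telescoping identity. For any $a<b$ we have $P_b-P_a=\sum_{i=a+1}^{b}p_i$ and $P_b(y)-P_a(y)=\sum_{i=a+1}^{b}p_i(y)$, so the secant slope
\[
S(a,b):=\frac{P_b(y)-P_a(y)}{P_b-P_a}=\frac{\sum_{i=a+1}^{b}p_i\,r_i}{\sum_{i=a+1}^{b}p_i}
\]
is a convex combination of the ratios $r_i$ with the strictly positive weights $p_i$. Consequently $S(a,b)$ always lies between the smallest and the largest of the $r_i$ over the index range $a<i\le b$.

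From here both directions follow. For the ``if'' direction, if $r_i$ is decreasing then every ratio entering $S(n,m)$ has index $\le m$ and is therefore at least $r_m$, while every ratio entering $S(m,k)$ has index $\ge m+1$ and is therefore at most $r_{m+1}$; since $r_m\ge r_{m+1}$, the sandwiching yields $S(n,m)\ge r_m\ge r_{m+1}\ge S(m,k)$, which is exactly concavity, and reversing all inequalities handles the convex/increasing case. For the ``only if'' direction I would specialize to consecutive triples $m=n+1,\ k=n+2$, where each secant collapses to a single ratio, namely $S(n,n+1)=r_{n+1}$ and $S(n+1,n+2)=r_{n+2}$; concavity of every such triple then reads $r_{n+1}\ge r_{n+2}$, i.e.\ the sequence $r_n$ is monotonically decreasing.

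The part requiring the most care is fixing the orientation convention so that ``concave'' matches ``decreasing'': this depends on both coordinate sequences being strictly increasing, so that the points are ordered left to right, and on the weights $p_i$ being strictly positive, both of which are guaranteed by the standing no-arbitrage conditions $p_n>0$ and $p_n>p_{n+1}$. Everything else reduces to the elementary fact that a weighted average lies between its extreme entries, so once the slope reformulation and the telescoping identity are in place the argument is routine.
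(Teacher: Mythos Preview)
Your argument is correct and follows essentially the same route as the paper: both telescope $P_b-P_a=\sum_{i=a+1}^{b}p_i$ and reduce the concavity inequality to the pairwise comparison $p_i\,p_j(y)\ge p_j\,p_i(y)$ for $i>j$, which is precisely the decreasing-ratio hypothesis; you package this via the slopes $S(a,b)$ as weighted averages of the $r_i$, whereas the paper expands the cross-product form $(P_k-P_n)P_m(y)\ge (P_m-P_n)P_k(y)+(P_k-P_m)P_n(y)$ and cancels directly. You additionally supply the converse by specializing to consecutive triples, a direction the paper's proof does not spell out.
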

\begin{proof}
Let three points $n<m<k$ be the maturity of the swaps,
The concavity inequality is
\begin{equation*}
  (P_k -P_n)P_m(y) \ge (P_m-P_n)P_k(y) + (P_k -P_m)P_n(y)
\end{equation*}
which is equivalent to
\begin{equation*}
(\sum_{i=n+1}^{k} p_{i} ) (\sum_{j=n+1}^{m} p_{j}(y)) \ge (\sum_{i=n+1}^{k} p_{i}(y) ) (\sum_{j=n+1}^{m} p_{j})
\end{equation*}
After cancelation, we need to prove
\begin{equation*}
\sum_{i=m+1}^{k}p_{i} \sum_{j=n+1}^{m} p_{j}(y) \ge \sum_{i=m+1}^{k}p_{i}(y) \sum_{j=n+1}^{m} p_{j}
\end{equation*}
But by monotonicity, we know for each $i > j$, we have
\begin{equation*}
  p_i p_j(y) \ge p _{j} p_{i}(y)
\end{equation*}
therefore proved our claim.
\end{proof}

Combining lemma \ref{lemma5}  and lemma\ref{lemma6}, we have
\begin{lemma}
When the spread $y>0$ and when swap curve increasing, the three points on the plane
\begin{equation*}
  \big(P_{n}, P_{n}(y)\big), \quad \big(P_{m}, P_{m}(y)\big), \quad \big(P_{k}, P_{k}(y)\big)
\end{equation*}
is concave. When the spread $y><$ and when swap curve increasing, the three points on the plane
\begin{equation*}
  \big(P_{n}, P_{n}(y)\big), \quad \big(P_{m}, P_{m}(y)\big), \quad \big(P_{k}, P_{k}(y)\big)
\end{equation*}
is convex.
\end{lemma}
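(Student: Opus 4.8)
The plan is to read off both assertions as immediate corollaries of Lemma~\ref{lemma5} and Lemma~\ref{lemma6}, which between them already contain all the substantive content; the final lemma is purely a matter of wiring the two together. For the first assertion, with spread $y>0$ and the swap curve increasing, Lemma~\ref{lemma5} asserts that $p_n(y)/p_n$ is decreasing in $n$. Lemma~\ref{lemma6} is an equivalence: the three points $(P_n,P_n(y))$, $(P_m,P_m(y))$, $(P_k,P_k(y))$ are concave exactly when $p_n(y)/p_n$ is decreasing. Invoking the ``decreasing $\Rightarrow$ concave'' direction of Lemma~\ref{lemma6} with the conclusion of Lemma~\ref{lemma5} disposes of this case in a single line.

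For the second assertion (which, as written, should read $y<0$ rather than $y><$), the monotonicity lemmas do not apply verbatim, since Lemma~\ref{lemma5} is proved only for positive spreads. The device I would use is a reversal of roles between the base curve and the shifted curve. When $y<0$, the original curve $\{x_n\}$ is itself obtained from the perturbed curve $\{x_n+y\}$ by the positive parallel shift $-y>0$. Applying Lemma~\ref{lemma5} with $\{x_n+y\}$ in the role of the base curve and $-y$ in the role of the spread gives that $p_n/p_n(y)$ is decreasing in $n$; taking reciprocals (all discount factors being strictly positive) shows $p_n(y)/p_n$ is increasing. The ``increasing $\Rightarrow$ convex'' direction of Lemma~\ref{lemma6} then yields convexity of the three points.

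The step I expect to be the main obstacle is justifying this role-reversal rigorously. Lemmas~\ref{lemma1}--\ref{lemma5} are all established under the standing hypothesis $y>0$, so to apply Lemma~\ref{lemma5} to the pair $(\{x_n+y\},-y)$ I must first check that the perturbed curve $\{x_n+y\}$ still satisfies the positivity and strict-decrease requirements on its discount factors $p_n(y)$ that those lemmas presuppose, and that the bootstrap identity~\eqref{discreteswap.eq} is genuinely symmetric under this exchange of base and shifted curves. Once that symmetry is confirmed, both assertions follow mechanically from Lemma~\ref{lemma5} and Lemma~\ref{lemma6}, with no further computation required.
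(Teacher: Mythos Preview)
Your proposal is correct and matches the paper's approach, which consists solely of the sentence ``Combining lemma~\ref{lemma5} and lemma~\ref{lemma6}'' with no further argument. For the $y>0$ case your wiring of Lemma~\ref{lemma5} into Lemma~\ref{lemma6} is exactly what the paper intends; for the $y<0$ case the paper supplies no detail at all, and your role-reversal device (treating $\{x_n+y\}$ as the base curve with positive shift $-y$) is the natural way to reduce to Lemma~\ref{lemma5} as stated---you are right that this requires the perturbed curve to retain positive, decreasing discount factors and to remain increasing, hypotheses the paper tacitly assumes throughout.
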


The combination of above lemmas gives the following main result
\begin{theorem}
If the swap curve is upward, plot each swap rate against its duration, the curve  should be concave under parallel movement assumption, otherwise we can construct an arbitrage.
\end{theorem}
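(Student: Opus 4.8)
The plan is to mirror the zero-coupon butterfly of Section 2, but now trading three par swaps at maturities $n<m<k$ and exploiting the annuity inequalities built up through the combined lemma. First I would form the butterfly that \emph{receives} fixed on the two wings $n,k$ and \emph{pays} fixed on the body $m$, choosing notionals that are simultaneously notional-neutral, $N_n+N_k=N_m$, and duration-(DV01-)neutral, $N_nP_n+N_kP_k=N_mP_m$. Solving these two linear constraints gives, up to scale, $N_n=P_k-P_m$, $N_m=P_k-P_n$, $N_k=P_m-P_n$, all positive since $P_n<P_m<P_k$. Because each swap has zero value at inception, the portfolio costs nothing to enter, and the DV01 identification makes $P_j$ the relevant notion of duration for the $j$-year swap.

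Next I would compute the mark-to-market under a parallel shift $y$ of the swap curve. Repricing each par swap with the bootstrapped factors $p_\cdot(y),P_\cdot(y)$ and using \eqref{discreteswap.eq} shows that a receiver swap of unit notional and maturity $j$ changes value by exactly $-y\,P_j(y)$. Hence the portfolio's value becomes $\Pi(y)=y\big[N_mP_m(y)-N_nP_n(y)-N_kP_k(y)\big]$, and with the chosen weights the bracket is precisely the concavity expression $(P_k-P_n)P_m(y)-(P_k-P_m)P_n(y)-(P_m-P_n)P_k(y)$ for the three points $(P_\cdot,P_\cdot(y))$. By the combined lemma this bracket is $\ge 0$ when $y>0$ (annuity points concave) and $\le 0$ when $y<0$ (convex), so $\Pi(y)\ge 0$ for \emph{every} parallel move: the butterfly carries nonnegative convexity irrespective of the shape of the swap curve.

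The conclusion then comes from the carry. By notional-neutrality the floating legs cancel, so the net fixed coupon collected each period is $N_nx_n+N_kx_k-N_mx_m$. A short computation identifies its sign: it is $\le 0$ exactly when $x_m(P_k-P_n)\ge x_n(P_k-P_m)+x_k(P_m-P_n)$, which is the statement that $(P_n,x_n),(P_m,x_m),(P_k,x_k)$ are concave, i.e.\ that the swap rate is a concave function of duration. Therefore, if the swap curve fails to be concave on some triple $n<m<k$, the butterfly earns strictly positive carry while still gaining (never losing) from any parallel shift; held to any future date it returns strictly positive profit from a zero-cost position, an arbitrage. Contraposing, absence of arbitrage forces the swap-rate-against-duration curve to be concave.

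The hard part will be upgrading the ``positive profit at any future time'' claim from the instantaneous mark-to-market to a genuine multi-period statement. The coupon carry above is clean, but as the swaps age the remaining curve must be re-bootstrapped and roll-down terms (the change in each leg's value as its maturity shortens) enter the period profit alongside the carry. Making the total horizon-by-horizon P\&L nonnegative requires propagating the annuity inequalities of Lemmas \ref{lemma2}--\ref{lemma5} forward through the discrete bootstrap, exactly as the zero-coupon argument needed the full $e^{y_it}$ roll factors and the convexity inequality rather than the bare Jensen step. The combined lemma delivers the nonnegative-convexity half cleanly; controlling the roll/carry half uniformly over all maturities and horizons inside the discrete-compounding bootstrap is the genuine obstacle.
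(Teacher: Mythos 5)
Your proposal is essentially the paper's own proof: the same butterfly with weights $N_n=P_k-P_m$, $N_m=P_k-P_n$, $N_k=P_m-P_n$, the same decomposition of P\&L into fixed-coupon carry (whose sign is governed by the concavity of $(P_j,x_j)$) plus a mark-to-market term $y\bigl[N_mP_m(y)-N_nP_n(y)-N_kP_k(y)\bigr]$ controlled by the combined concavity lemma for the points $(P_j,P_j(y))$. The multi-period roll-down difficulty you flag at the end is real, but the paper does not resolve it either --- it simply introduces the truncated annuities $\tilde P_i(y)$ and asserts that $P_i(y)-\tilde P_i(y)$ is independent of $i$, so your version is, if anything, more candid about that gap.
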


\begin{proof}
We now construct a portfolio consisting with three swaps maturing at time $T_{1}, T_{2}, T_{3}$. The corresponding swap rates are $x_{1}, x_{2}, x_{3}$. The notional of these three swaps are $\lambda_1, \lambda_2, \lambda_3$. In particular we need
\begin{align*}
  \lambda_1 & = P_{3} - P_{2}\\
  \lambda_2 & = P_{3} - P_{1} \\
  \lambda_3 & = P_{2} - P_{1}
\end{align*}
According to the condition, we have $x_{1} <x_{2} <x_{3}$. We assume by contradiction that $(P_1, x_1), (P_2, x_2), (P_3, x_3)$ is convex, we will construct an arbitrage portfolio.
We long the fixed leg in the first and third swap while short the fixed leg in the second swap. After the parallel movement by an amount $y>0$, our profit and loss comes from two components,
\begin{equation*}
  L=L_{1} +L_{2}
\end{equation*}
where $L_{1}$ is the interest accrual and $L_{2}$ is the mark to market profit and loss. The interest accrual is
\begin{equation*}
  L_{1} = \lambda_{1}x_{1} +\lambda_{3} x_{3} -\lambda_{2}x_{2} >0
\end{equation*}
by assumption. The $L_{2}$  component is the market to market
\begin{equation} \label{equation1}
L_{1} =   - y \lambda_1 \tilde{P}_{1}(y) - y \lambda_{3} \tilde{P}_{3}(y) + y \lambda _{2} \tilde{P}_{2}(y)
\end{equation}
where $\tilde{P}_i(y)$ is the annuity factor from time $t$ the final maturity year $i$. We want to show that if $y>0$, we have
\begin{equation*}
\lambda_1 \tilde{P}_{1}(y)  +  \lambda_{3} \tilde{P}_{3}(y) \le  \lambda _{2} \tilde{P}_{2}(y)
\end{equation*}
which is equivalent to  $L_{2} \ge 0$.  First by lemma \ref{lemma5} we see know that
\begin{equation*}
  (P_{1}, P_{1}(y)), (P_{2}, P_{2}(y)), (P_{3}, P_{3}(y))
\end{equation*}
is concave, given the fact that $P_{i}(y)-\tilde{P}_{i}(y)$ is a constant regardless of $i$, we have
\begin{equation*}
   (P_{1}, \tilde{P}_{1}(y)), (P_{2}, \tilde{P}_{2}(y)), (P_{3}, \tilde{P}_{3}(y))
\end{equation*}
must be concave. This is equivalent to saying
\begin{equation*}
\lambda_1 \tilde{P}_{1}(y)  +  \lambda_{3} \tilde{P}_{3}(y) \le  \lambda _{2} \tilde{P}_{2}(y)
\end{equation*}
which proved the theorem.
\end{proof}
In summary, technically we have shown that under the parallel movement assumption if the swap curves are increasing, the curve must be of concave shaped.

\newpage

\end{document}